\newtheorem{theorem}{Theorem}[section]
\newtheorem*{theorem*}{Theorem}
\newtheorem*{proposition*}{Proposition}
\newtheorem*{lemma*}{Lemma}
\newtheorem*{corollary*}{Corollary}
\newtheorem*{conjecture*}{Conjecture}
\newtheorem*{fact*}{Fact}
\newtheorem*{exercise*}{Exercise}
\newtheorem*{hypothesis*}{Hypothesis}
\theoremstyle{definition}
\newtheorem{definition}[theorem]{Definition}
\newtheorem{mytheorem}[theorem]{Theorem}
\newtheorem{mycorollary}[theorem]{Corollary}
\newtheorem{notation}[theorem]{Notation}
\newtheorem{baseexample}[theorem]{Example}
\newenvironment{example}{%
  \begin{quote}%
  \begin{baseexample}%
}{%
  \end{baseexample}%
  \end{quote}%
}
\newtheorem{exercise-easy}[theorem]{Exercise}
\newtheorem{exercise-med}[theorem]{Exercise}
\newtheorem{exercise-hard}[theorem]{Exercise$^\star$}
\newtheorem*{claim*}{Claim}
\newtheorem*{remark*}{Remark}
\newtheorem*{observation*}{Observation}
\newcommand{\Write}{\mathcal{W}}
\newcommand{\Merge}{\mathcal{M}}
\newcommand{\Query}{\mathcal{Q}}
\newcommand{\Values}{\mathbb{V}}
\newcommand{\States}{\mathbb{S}}
\newcommand{\Inputs}{\mathbb{I}}
\newcommand{\X}{\mathbb{X}}
\newcommand{\Evaluate}{\mathcal{E}}
\newcommand{\Inputset}{\mathcal{I}}
\newcommand{\Problem}{\mathcal{P}}
\newcommand{\CF}{\mathcal{F}} %
\newcommand{\LocalClause}{\mathbb{LC}}
\newcommand{\Clause}{\mathbb{C}}
\newcommand{\Input}{\Inputset}
\newcommand{\mynote}[3]{%
  \ifthenelse{\boolean{showcomments}}{%
   \fbox{\bfseries\sffamily\scriptsize#1}%
   {\small$\blacktriangleright$\textsf{\emph{\color{#3}{#2}}}$\blacktriangleleft$}}%
  {%
   \@bsphack
   \@esphack
}%
}
\begin{document}

\title{A Preliminary Model of Coordination-free Consistency}

\author{Shulu Li}
\email{shulu\_li@outlook.com}
\orcid{0009-0001-7289-6577}
\affiliation{%
  \institution{University of California, Berkeley}
  \city{Berkeley}
  \state{California}
  \country{USA}
}

\author{Edward A. Lee}
\email{eal@berkeley.edu}
\orcid{0000-0002-5663-0584}
\affiliation{%
  \institution{University of California, Berkeley}
  \city{Berkeley}
  \state{California}
  \country{USA}
}

\begin{abstract}

Building consistent distributed systems has largely depended on complex coordination strategies that are not only tricky to implement, but also take a toll on performance as they require nodes to wait for coordination messages. 
In this paper, we explore the conditions under which no coordination is required to guarantee consistency. We present a simple and succinct theoretical model for distributed computation that separates coordination from computation. The main contribution of this work is mathematically defining concepts in distributed computing such as strong eventual consistency, consistency, consistent under partition, confluence, coordination-free, and monotonicity. 
Based on these definitions, we prove necessary and sufficient conditions for strong eventual consistency and give a proof of the CALM theorem from a distributed computation perspective. 
\end{abstract}

\begin{CCSXML}
<ccs2012>
   <concept>
       <concept_id>10003752.10003753</concept_id>
       <concept_desc>Theory of computation~Models of computation</concept_desc>
       <concept_significance>500</concept_significance>
       </concept>
   <concept>
       <concept_id>10003752.10003753.10003761.10003763</concept_id>
       <concept_desc>Theory of computation~Distributed computing models</concept_desc>
       <concept_significance>500</concept_significance>
       </concept>
 </ccs2012>
\end{CCSXML}

\ccsdesc[500]{Theory of computation~Models of computation}
\ccsdesc[500]{Theory of computation~Distributed computing models}

\keywords{Model of Computation, Distributed Computation, Consistency, Coordination-free, Monotonicity, CALM Theorem}

\maketitle

\section{Introduction}\label{sec:intro}

Consistency has long been one of the core design goals in distributed systems. However, distributed algorithms that guarantee consistency such as Paxos \cite{lamport_part-time_1998} and 2PC \cite{goos_notes_1978} are not only tricky to implement, but more importantly, have a performance impact because replicas have to wait for coordination. This seems unsolvable with the consistency-availability trade-off codified in the CAP theorem \cite{brewer2000cap}, but, for certain problems, consistency is achievable without coordination~\cite{helland_building_2009,Shapiro:11:CRDT}.

In this paper, we explore the theoretical boundaries in coordination-free distributed computation.

\subsection{Related Work}

Traditional theoretical frameworks for building consistent distributed systems, such as the CAP theorem \cite{brewer2000cap,Brewer:17:CAP}, the CAL theorem \cite{LeeEtAl:23:CAL_CPS,LeeEtAl:23:CAL_IC}, and linearizability \cite{herlihy_linearizability_1990}, have primarily focused on providing solutions for arbitrary, general-purpose problems in distributed computing. These approaches often require coordination mechanisms to enforce consistency, leading to inherent trade-offs between consistency and availability. In particular, the CAP theorem demonstrates that achieving strong consistency, high availability, and partition tolerance simultaneously is impossible in the presence of network partitions, forcing system designers and developers to make difficult choices based on the application requirements.

In contrast, Conflict-free Replicated Data Types (CRDTs) \cite{Shapiro:11:CRDT} introduce a new paradigm in distributed systems by enabling high availability while relaxing the consistency guarantees to allow for weaker forms of consistency. CRDTs provide a guarantee of strong eventual consistency: all replicas of the system will eventually converge to the same state. This property makes CRDTs particularly well-suited for applications where high availability is critical, such as collaborative text editing \cite{nicolaescu_near_2016}\cite{litt_peritext_2022}, as well as in local-first applications \cite{kleppmann_automerge_nodate}, where users may interact with local copies of data that later synchronize with other replicas.

Although CRDTs have been extensively studied and implemented, the majority of prior work has focused on identifying sufficient conditions for achieving CRDT properties, with less attention paid to understanding the necessary conditions that must be met for a system to guarantee strong eventual consistency. One of the key insights from earlier studies \cite{Shapiro:11:CRDT} is that CRDTs can be modeled as join-semilattices, but the theoretical framework for this approach has primarily been concerned with providing sufficient conditions, rather than exploring the underlying requirements for eventual consistency.

One recent advancement in this line of research is the work of Laddad et al.~\cite{laddad_keep_2022}, which draws connections between CRDTs and the CALM theorem. First conjectured in the context of database theory at PODS 2010 \cite{hellerstein_datalog_2010, hellerstein_declarative_2010}, the CALM theorem \cite{hellerstein_keeping_2020} asserts that a problem can have a coordination-free, consistent implementation if and only if the problem is monotonic. This result was initially presented as a conjecture and later formalized through the use of relational transducers \cite{ameloot_relational_2010, ameloot_weaker_2016}, primarily within the database context. While the CALM theorem has been a valuable tool for reasoning about the coordination requirements of distributed systems, it has largely remained confined to the database domain.

\subsection{Our Contributions}

In this work, we aim to formally discuss coordination-free consistency in a distributed computation context.
Our contributions can be summarized as follows:
\begin{itemize}
\itemsep=0ex
    \item We present a simple and succinct model for computation that separates computation from coordination and is especially suitable for analyzing consistency.
    \item We identify ACID 2.0~\cite{helland_building_2009} as necessary and sufficient condition for strong eventual consistency and give a proof.
    \item We define consistency as a partial order on problem outputs, modeling the problem of distributed computation as a function on partially ordered sets.
    \item We give a formal definition of coordination-freeness as confluent and consistent under partition. 
    \item We present an interpretation of the CALM theorem from a distributed computation perspective with our definition of consistency and coordination-freeness, and give a proof of the theorem.
\end{itemize}

\subsection{Technical Overview}

The remainder of the paper is organized as follows. We first present a simple and succinct theoretical model for distributed computation. This model describes distributed computation as the evaluation of clauses consisting of writes and merges. The main contribution of this model is that it separates the coordination layer from the computation layer: the coordination layer limits the execution traces that can be produced, and the computation layer executes the actual calculation. This trait makes it especially intuitive to argue about coordination-freeness and formally model the output of distributed computation.

Using this model of distributed computation, we first prove necessary and sufficient conditions for strong eventual consistency. Strong eventual consistency is formally defined as eventual state convergence; two replicas have strongly eventually consistent state if their state is identical after they have received the same inputs. The conclusion we reach in this section is that ACID 2.0~\cite{helland_building_2009} (associative, commutative, idempotent, and distributed) is necessary and sufficient for strong eventual consistency.

When writing applications, it is often the program output that we care about rather than the program state. In Section \ref{sec:consistency}, we formally define a problem as a function from sets of inputs to output values. Our main contribution in this section is formally defining consistency as a partial order on problem outputs; this is both general enough to apply to a variety of problems that allow for weaker consistency, and also specific enough to capture consistency.

Building on the definition of problems, we further define the implementation of problems. An implementation consists of two parts: a coordination function and an instance of an abstract data type. This aligns with our model of distributed computation that separates coordination and computation. 

One of the key contributions of this work is giving a formal definition of coordination-freeness. Previous work on the CALM theorem either has not given a formal definition of coordination-freeness \cite{hellerstein_keeping_2020} or gave the definition in a database context of transducer networks \cite{ameloot_relational_2010}. Our definition of coordination-freeness focuses on two qualities: confluence and consistency under partition. The definition and reasons for it are laid out in more detail in Section \ref{sec:coordination-free}.

Lamport's happens-before relation of events in a distributed system can be viewed as a partial order \cite{lamport_time_1978}. With consistency defined as a partial order on the problem output, distributed computation is in essence a function on posets, from the partial order of events to the partial order of outputs. We further define monotonicity and give a formal proof of the CALM theorem from a distributed computation perspective.

\section{Model of Distributed Computation}\label{sec:intro}

We define our model of computation in terms of a particular form of abstract data types.

\begin{definition}[Abstract Data Type]\label{def:adt}
An abstract data type in this paper is a tuple consisting of three functions and one initial state $(\Write,\Query, \Merge, s_0)$ , defined on three domains:  $\mathbb{S}$, $\Inputs$, and $\mathbb{V}$. $\mathbb{S}$ is the set of possible states, $\Inputs$ is an input alphabet (a set of values that that an individual input can have), and $\mathbb{V}$ is the set of possible outputs.
\begin{itemize}
\itemsep=0ex
    \item The initial state is $s_0 \in \mathbb{S}$.
    \item The write function has the form $\Write:\mathbb{S}\times\Inputs\rightarrow\mathbb{S}$ , which takes in an external input and modifies the internal state.
    \item The query function $\Query:\mathbb{S} \rightarrow\mathbb{V}$ reads the internal state $s$ and produces an output $v \in \mathbb{V}$ without changing the state.
    \item The merge function $\Merge:\mathbb{S}\times\mathbb{S}\rightarrow\mathbb{S}$ takes in two states and produces a new state.
\end{itemize}
\end{definition}

\begin{figure}[ht]
    \centering
    \resizebox{0.7\columnwidth}{!}{
        \tikzset{every picture/.style={line width=0.75pt}} %

\begin{tikzpicture}[x=0.75pt,y=0.75pt,yscale=-1,xscale=1]

\draw   (80,60) -- (140,60) -- (140,100) -- (80,100) -- cycle ;
\draw    (60,70) -- (78,70) ;
\draw [shift={(80,70)}, rotate = 180] [color={rgb, 255:red, 0; green, 0; blue, 0 }  ][line width=0.75]    (10.93,-3.29) .. controls (6.95,-1.4) and (3.31,-0.3) .. (0,0) .. controls (3.31,0.3) and (6.95,1.4) .. (10.93,3.29)   ;
\draw    (60,90) -- (78,90) ;
\draw [shift={(80,90)}, rotate = 180] [color={rgb, 255:red, 0; green, 0; blue, 0 }  ][line width=0.75]    (10.93,-3.29) .. controls (6.95,-1.4) and (3.31,-0.3) .. (0,0) .. controls (3.31,0.3) and (6.95,1.4) .. (10.93,3.29)   ;
\draw    (140,80) -- (158,80) ;
\draw [shift={(160,80)}, rotate = 180] [color={rgb, 255:red, 0; green, 0; blue, 0 }  ][line width=0.75]    (10.93,-3.29) .. controls (6.95,-1.4) and (3.31,-0.3) .. (0,0) .. controls (3.31,0.3) and (6.95,1.4) .. (10.93,3.29)   ;
\draw   (240,60) -- (300,60) -- (300,100) -- (240,100) -- cycle ;
\draw    (220,70) -- (238,70) ;
\draw [shift={(240,70)}, rotate = 180] [color={rgb, 255:red, 0; green, 0; blue, 0 }  ][line width=0.75]    (10.93,-3.29) .. controls (6.95,-1.4) and (3.31,-0.3) .. (0,0) .. controls (3.31,0.3) and (6.95,1.4) .. (10.93,3.29)   ;
\draw    (220,90) -- (238,90) ;
\draw [shift={(240,90)}, rotate = 180] [color={rgb, 255:red, 0; green, 0; blue, 0 }  ][line width=0.75]    (10.93,-3.29) .. controls (6.95,-1.4) and (3.31,-0.3) .. (0,0) .. controls (3.31,0.3) and (6.95,1.4) .. (10.93,3.29)   ;
\draw    (300,80) -- (318,80) ;
\draw [shift={(320,80)}, rotate = 180] [color={rgb, 255:red, 0; green, 0; blue, 0 }  ][line width=0.75]    (10.93,-3.29) .. controls (6.95,-1.4) and (3.31,-0.3) .. (0,0) .. controls (3.31,0.3) and (6.95,1.4) .. (10.93,3.29)   ;
\draw   (410,80) -- (450,80) -- (450,100) -- (410,100) -- cycle ;
\draw    (380,70) -- (468,70) ;
\draw [shift={(470,70)}, rotate = 180] [color={rgb, 255:red, 0; green, 0; blue, 0 }  ][line width=0.75]    (10.93,-3.29) .. controls (6.95,-1.4) and (3.31,-0.3) .. (0,0) .. controls (3.31,0.3) and (6.95,1.4) .. (10.93,3.29)   ;
\draw    (450,90) -- (468,90) ;
\draw [shift={(470,90)}, rotate = 180] [color={rgb, 255:red, 0; green, 0; blue, 0 }  ][line width=0.75]    (10.93,-3.29) .. controls (6.95,-1.4) and (3.31,-0.3) .. (0,0) .. controls (3.31,0.3) and (6.95,1.4) .. (10.93,3.29)   ;
\draw    (390,90) -- (408,90) ;
\draw [shift={(410,90)}, rotate = 180] [color={rgb, 255:red, 0; green, 0; blue, 0 }  ][line width=0.75]    (10.93,-3.29) .. controls (6.95,-1.4) and (3.31,-0.3) .. (0,0) .. controls (3.31,0.3) and (6.95,1.4) .. (10.93,3.29)   ;
\draw    (390,70) -- (390,90) ;

\draw (97,72.4) node [anchor=north west][inner sep=0.75pt]    {$\mathcal{{\displaystyle W}}$};
\draw (259,72.4) node [anchor=north west][inner sep=0.75pt]    {$\mathcal{M}$};
\draw (421,83.4) node [anchor=north west][inner sep=0.75pt]    {$\mathcal{Q}$};
\draw (47,62.4) node [anchor=north west][inner sep=0.75pt]    {$s$};
\draw (47,82.4) node [anchor=north west][inner sep=0.75pt]    {$i$};
\draw (207,62.4) node [anchor=north west][inner sep=0.75pt]    {$s$};
\draw (371,62.4) node [anchor=north west][inner sep=0.75pt]    {$s$};
\draw (161,72.4) node [anchor=north west][inner sep=0.75pt]    {$s'$};
\draw (207,82.4) node [anchor=north west][inner sep=0.75pt]    {$s'$};
\draw (321,72.4) node [anchor=north west][inner sep=0.75pt]    {$s''$};
\draw (477,63.4) node [anchor=north west][inner sep=0.75pt]    {$s$};
\draw (477,83.4) node [anchor=north west][inner sep=0.75pt]    {$v$};

\end{tikzpicture}
    }
    \caption{Rendering of $\Write$, $\Merge$, and $\Query$}
    \label{fig:replicated-object}
\end{figure}

Figure \ref{fig:replicated-object} is a rendering of $\Write$, $\Merge$, and $\Query$ that will be used later to illustrate an execution. Notice that both $\Write$ and $\Merge$ have two inputs and one output, while $\Query$ gives an output $v$ based on the state without modifying it. 

The separation between the write function and the query function is important. This allows us to discuss the program state and the program output separately. Sometimes two programs can have the same program state but different outputs, and this can determine whether they require coordination or not, as shown in Examples \ref{example:deadlock3} and \ref{example:garbage-collection}.

In a distributed computing setting, there are several replicas of the abstract data type in different locations, and the replicas can send messages to each other through some communication method.
We use the $\Merge$ function to model the communication between replicas.
To implement the merge function, there is no need to send the entire state over the network in order for the merge function to execute; one practical implementation is to only send the parts needed by the merge function, for example a message containing the latest changes. 

\begin{definition}[Object]\label{def:object}
An object $\mathcal{O}$ is a single instance of an abstract data type.  
\end{definition}

\begin{definition}[Replicated Object]\label{def:replicated-object}
Replicated objects $\mathcal{RO}$ are several distinct instances of an abstract data type, where the intent is that they deliver consistent responses to queries. Each instance of the abstract data type is called a replica. 
\end{definition}

Since a singular object has no need for communication, it doesn't need a merge function $\Merge$. 

\begin{notation}[Object]
An object $\mathcal{O}$ of an abstract data type is represented by the tuple $\mathcal{O} = (\Write,\Query, s_0)$. A replicated object $\mathcal{RO}$ of an abstract data type is represented by $\mathcal{RO} = (\Write,\Query, \Merge, s_0)$.
In both cases, the domains $(\mathbb{S}, \Inputs, \mathbb{V})$ are implied. 
\end{notation}

\begin{notation}[Write and Merge]\label{notation:write-and-merge} 
Since $\Write$ and $\Merge$ both take in two inputs and produce one output, we introduce a notational shorthand for $\Write$ and $\Merge$. Both operations are left-side binding.
\begin{itemize}
\itemsep=0ex
    \item $s'=\Write(s,i)$ can be written as $s'=s \Write i$.
    \item $s''=\Merge(s,s')$ can be written as $s''=s \Merge s'$.
\end{itemize}
\end{notation}

\begin{example}[Distributed Execution]\label{example:execution-dag} 
Let us now look at an example of distributed execution of two replicas of a replicated object. Both replicas do a write, a merge, another write, and another merge.

\begin{figure}[ht]
    \centering
    
    \resizebox{0.7\columnwidth}{!}{
        \tikzset{every picture/.style={line width=0.75pt}} %

\begin{tikzpicture}[x=0.75pt,y=0.75pt,yscale=-1,xscale=1]

\draw   (80,60) -- (140,60) -- (140,100) -- (80,100) -- cycle ;
\draw    (20,70) -- (78,70) ;
\draw [shift={(80,70)}, rotate = 180] [color={rgb, 255:red, 0; green, 0; blue, 0 }  ][line width=0.75]    (10.93,-3.29) .. controls (6.95,-1.4) and (3.31,-0.3) .. (0,0) .. controls (3.31,0.3) and (6.95,1.4) .. (10.93,3.29)   ;
\draw    (60,90) -- (78,90) ;
\draw [shift={(80,90)}, rotate = 180] [color={rgb, 255:red, 0; green, 0; blue, 0 }  ][line width=0.75]    (10.93,-3.29) .. controls (6.95,-1.4) and (3.31,-0.3) .. (0,0) .. controls (3.31,0.3) and (6.95,1.4) .. (10.93,3.29)   ;
\draw   (80,140) -- (140,140) -- (140,180) -- (80,180) -- cycle ;
\draw    (20,150) -- (78,150) ;
\draw [shift={(80,150)}, rotate = 180] [color={rgb, 255:red, 0; green, 0; blue, 0 }  ][line width=0.75]    (10.93,-3.29) .. controls (6.95,-1.4) and (3.31,-0.3) .. (0,0) .. controls (3.31,0.3) and (6.95,1.4) .. (10.93,3.29)   ;
\draw    (60,170) -- (78,170) ;
\draw [shift={(80,170)}, rotate = 180] [color={rgb, 255:red, 0; green, 0; blue, 0 }  ][line width=0.75]    (10.93,-3.29) .. controls (6.95,-1.4) and (3.31,-0.3) .. (0,0) .. controls (3.31,0.3) and (6.95,1.4) .. (10.93,3.29)   ;
\draw   (200,60) -- (260,60) -- (260,100) -- (200,100) -- cycle ;
\draw   (200,140) -- (260,140) -- (260,180) -- (200,180) -- cycle ;
\draw   (320,60) -- (380,60) -- (380,100) -- (320,100) -- cycle ;
\draw    (260,70) -- (318,70) ;
\draw [shift={(320,70)}, rotate = 180] [color={rgb, 255:red, 0; green, 0; blue, 0 }  ][line width=0.75]    (10.93,-3.29) .. controls (6.95,-1.4) and (3.31,-0.3) .. (0,0) .. controls (3.31,0.3) and (6.95,1.4) .. (10.93,3.29)   ;
\draw    (300,90) -- (318,90) ;
\draw [shift={(320,90)}, rotate = 180] [color={rgb, 255:red, 0; green, 0; blue, 0 }  ][line width=0.75]    (10.93,-3.29) .. controls (6.95,-1.4) and (3.31,-0.3) .. (0,0) .. controls (3.31,0.3) and (6.95,1.4) .. (10.93,3.29)   ;
\draw   (320,140) -- (380,140) -- (380,180) -- (320,180) -- cycle ;
\draw    (260,150) -- (318,150) ;
\draw [shift={(320,150)}, rotate = 180] [color={rgb, 255:red, 0; green, 0; blue, 0 }  ][line width=0.75]    (10.93,-3.29) .. controls (6.95,-1.4) and (3.31,-0.3) .. (0,0) .. controls (3.31,0.3) and (6.95,1.4) .. (10.93,3.29)   ;
\draw    (300,170) -- (318,170) ;
\draw [shift={(320,170)}, rotate = 180] [color={rgb, 255:red, 0; green, 0; blue, 0 }  ][line width=0.75]    (10.93,-3.29) .. controls (6.95,-1.4) and (3.31,-0.3) .. (0,0) .. controls (3.31,0.3) and (6.95,1.4) .. (10.93,3.29)   ;
\draw   (440,60) -- (500,60) -- (500,100) -- (440,100) -- cycle ;
\draw    (500,70) -- (558,70) ;
\draw [shift={(560,70)}, rotate = 180] [color={rgb, 255:red, 0; green, 0; blue, 0 }  ][line width=0.75]    (10.93,-3.29) .. controls (6.95,-1.4) and (3.31,-0.3) .. (0,0) .. controls (3.31,0.3) and (6.95,1.4) .. (10.93,3.29)   ;
\draw   (440,140) -- (500,140) -- (500,180) -- (440,180) -- cycle ;
\draw    (500,150) -- (558,150) ;
\draw [shift={(560,150)}, rotate = 180] [color={rgb, 255:red, 0; green, 0; blue, 0 }  ][line width=0.75]    (10.93,-3.29) .. controls (6.95,-1.4) and (3.31,-0.3) .. (0,0) .. controls (3.31,0.3) and (6.95,1.4) .. (10.93,3.29)   ;
\draw    (140,70) -- (198,70) ;
\draw [shift={(200,70)}, rotate = 180] [color={rgb, 255:red, 0; green, 0; blue, 0 }  ][line width=0.75]    (10.93,-3.29) .. controls (6.95,-1.4) and (3.31,-0.3) .. (0,0) .. controls (3.31,0.3) and (6.95,1.4) .. (10.93,3.29)   ;
\draw    (140,150) -- (198,150) ;
\draw [shift={(200,150)}, rotate = 180] [color={rgb, 255:red, 0; green, 0; blue, 0 }  ][line width=0.75]    (10.93,-3.29) .. controls (6.95,-1.4) and (3.31,-0.3) .. (0,0) .. controls (3.31,0.3) and (6.95,1.4) .. (10.93,3.29)   ;
\draw    (150,70) -- (199.11,168.21) ;
\draw [shift={(200,170)}, rotate = 243.43] [color={rgb, 255:red, 0; green, 0; blue, 0 }  ][line width=0.75]    (10.93,-3.29) .. controls (6.95,-1.4) and (3.31,-0.3) .. (0,0) .. controls (3.31,0.3) and (6.95,1.4) .. (10.93,3.29)   ;
\draw    (150,150) -- (198.72,91.54) ;
\draw [shift={(200,90)}, rotate = 129.81] [color={rgb, 255:red, 0; green, 0; blue, 0 }  ][line width=0.75]    (10.93,-3.29) .. controls (6.95,-1.4) and (3.31,-0.3) .. (0,0) .. controls (3.31,0.3) and (6.95,1.4) .. (10.93,3.29)   ;
\draw    (390,150) -- (438.72,91.54) ;
\draw [shift={(440,90)}, rotate = 129.81] [color={rgb, 255:red, 0; green, 0; blue, 0 }  ][line width=0.75]    (10.93,-3.29) .. controls (6.95,-1.4) and (3.31,-0.3) .. (0,0) .. controls (3.31,0.3) and (6.95,1.4) .. (10.93,3.29)   ;
\draw    (380,70) -- (438,70) ;
\draw [shift={(440,70)}, rotate = 180] [color={rgb, 255:red, 0; green, 0; blue, 0 }  ][line width=0.75]    (10.93,-3.29) .. controls (6.95,-1.4) and (3.31,-0.3) .. (0,0) .. controls (3.31,0.3) and (6.95,1.4) .. (10.93,3.29)   ;
\draw    (380,150) -- (438,150) ;
\draw [shift={(440,150)}, rotate = 180] [color={rgb, 255:red, 0; green, 0; blue, 0 }  ][line width=0.75]    (10.93,-3.29) .. controls (6.95,-1.4) and (3.31,-0.3) .. (0,0) .. controls (3.31,0.3) and (6.95,1.4) .. (10.93,3.29)   ;
\draw    (390,70) -- (439.11,168.21) ;
\draw [shift={(440,170)}, rotate = 243.43] [color={rgb, 255:red, 0; green, 0; blue, 0 }  ][line width=0.75]    (10.93,-3.29) .. controls (6.95,-1.4) and (3.31,-0.3) .. (0,0) .. controls (3.31,0.3) and (6.95,1.4) .. (10.93,3.29)   ;
\draw  [dash pattern={on 0.84pt off 2.51pt}] (30,62) .. controls (30,55.37) and (35.37,50) .. (42,50) -- (508,50) .. controls (514.63,50) and (520,55.37) .. (520,62) -- (520,98) .. controls (520,104.63) and (514.63,110) .. (508,110) -- (42,110) .. controls (35.37,110) and (30,104.63) .. (30,98) -- cycle ;
\draw  [dash pattern={on 0.84pt off 2.51pt}] (30,142) .. controls (30,135.37) and (35.37,130) .. (42,130) -- (508,130) .. controls (514.63,130) and (520,135.37) .. (520,142) -- (520,178) .. controls (520,184.63) and (514.63,190) .. (508,190) -- (42,190) .. controls (35.37,190) and (30,184.63) .. (30,178) -- cycle ;

\draw (3,62.4) node [anchor=north west][inner sep=0.75pt]    {$s_{0}$};
\draw (45,82.4) node [anchor=north west][inner sep=0.75pt]    {$i_{1}$};
\draw (3,142.4) node [anchor=north west][inner sep=0.75pt]    {$s_{0}$};
\draw (45,162.4) node [anchor=north west][inner sep=0.75pt]    {$i_{2}$};
\draw (97,72.4) node [anchor=north west][inner sep=0.75pt]    {$\mathcal{{\displaystyle W}}$};
\draw (97,152.4) node [anchor=north west][inner sep=0.75pt]    {$\mathcal{{\displaystyle W}}$};
\draw (221,72.4) node [anchor=north west][inner sep=0.75pt]    {$\mathcal{M}$};
\draw (221,152.4) node [anchor=north west][inner sep=0.75pt]    {$\mathcal{M}$};
\draw (337,72.4) node [anchor=north west][inner sep=0.75pt]    {$\mathcal{{\displaystyle W}}$};
\draw (337,152.4) node [anchor=north west][inner sep=0.75pt]    {$\mathcal{{\displaystyle W}}$};
\draw (461,72.4) node [anchor=north west][inner sep=0.75pt]    {$\mathcal{M}$};
\draw (461,152.4) node [anchor=north west][inner sep=0.75pt]    {$\mathcal{M}$};
\draw (141,50.4) node [anchor=north west][inner sep=0.75pt]    {$s_{1}$};
\draw (141,130.4) node [anchor=north west][inner sep=0.75pt]    {$s_{2}$};
\draw (261,50.4) node [anchor=north west][inner sep=0.75pt]    {$s_{3}$};
\draw (261,130.4) node [anchor=north west][inner sep=0.75pt]    {$s_{4}$};
\draw (381,50.4) node [anchor=north west][inner sep=0.75pt]    {$s_{5}$};
\draw (381,130.4) node [anchor=north west][inner sep=0.75pt]    {$s_{6}$};
\draw (563,60.4) node [anchor=north west][inner sep=0.75pt]    {$s_{7}$};
\draw (561,140.4) node [anchor=north west][inner sep=0.75pt]    {$s_{8}$};
\draw (285,82.4) node [anchor=north west][inner sep=0.75pt]    {$i_{3}$};
\draw (285,160.4) node [anchor=north west][inner sep=0.75pt]    {$i_{4}$};
\draw (31,32) node [anchor=north west][inner sep=0.75pt]   [align=left] {{\small Replica 1}};
\draw (31,192) node [anchor=north west][inner sep=0.75pt]   [align=left] {{\small Replica 2}};

\end{tikzpicture}
    }
    \caption{An Example of Distributed Execution}
    \label{fig:replicated-example}
\end{figure}
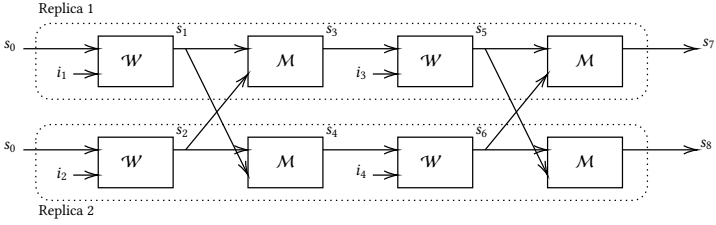

Figure \ref{fig:replicated-example} shows the diagram for a distributed execution. The top row represents the operations of replica one, and the bottom row represents the operations of replica two. Lines that cross the two rows mean network communication between the two replicas. Note that some inputs are made available to replica 1 and some to replica 2.

Let us take a closer look at the $s_3$ after the first merge in replica one.
To get to state $s_3$, replica 1 first applies input $i_1$ on the initial state $s_0$ with the $\Write$ operation to produce the state $s_1$. Then the state $s_1$ of replica 1 is merged with the state $s_2$ of replica 2 to produce state $s_3$. We can represent $s_3$ as $s_3=\Merge(s_1, s_2)=\Merge(\Write(s_0,i_1), \Write(s_0, i_2))$.
With our infix notation, this can be written $(s_0 \Write i_1) \Merge (s_0 \Write i_2)$.

Such expressions of states can always be reduced to refer to only the initial state $s_0$ and the inputs. We call such a reduced form a ``clause.'' For example, the clause that represents $s_3$ is $c_3=(s_0 \Write i_1) \Merge (s_0 \Write i_2)$. The difference between states and clauses is that states are values, and clauses are abstract formulas demonstrating how the state is produced. 

Similarly, we can represent $s_8$ as a clause.
To start, note that
$$s_8=\Merge(s_6,s_5)=\Merge(\Write(s_4,i_4), \Write(s_3, i_3)),$$ which can be reduced to
$$
c_8 = ((s_0 \Write i_2) \Merge (s_0 \Write i_1) \Write i_4) \Merge ((s_0 \Write i_1) \Merge (s_0 \Write i_2) \Write i_3).
$$
\end{example}

As shown in the previous example, we use clauses to model execution traces and states to denote the state of a replica after some execution. 
In general, a clause has the form 
$$\bnfpn{$c$} \bnfpo  \bnftd{$s_0$} \bnfor \bnfpn{$c$} \bnftd{$\Write$} \bnftd{$i$} \bnfor \bnfpn{$c$} \bnftd{$\Merge$} \bnfpn{$c$},$$
where $i\in \Inputs$ is an input.
The set of local clauses is the set of execution traces that could exist under local execution for an abstract data type (with no merge operations), and the set of all clauses is the set of traces that could exist under distributed execution.

\begin{definition}[Clause]\label{def:clause} The set of legal clauses $\Clause$ and the set of legal local clauses $\LocalClause$ are defined as follows in Backus–Naur form
 \cite{mccracken_backus-naur_2003}.
\begin{bnf*}
\bnfprod{$\Clause$} { \bnftd{$s_0$} \bnfor \bnfpn{$\Clause$} \bnfts{$\Write$} \bnfpn{Input} \bnfor \bnfpn{$\Clause$} \bnfts{$\Merge$} \bnfpn{$\Clause$}} \\
\bnfprod{$\LocalClause$} { \bnftd{$s_0$} \bnfor \bnfpn{$\LocalClause$} \bnfts{$\Write$} \bnfpn{Input}} \\
\bnfprod{Input} { \bnftd{$i\in\Inputs$} }
\end{bnf*}
\end{definition}

To transform clauses into states, we use an evaluation function $\Evaluate$.

\begin{definition}[Clause Evaluation]\label{def:clause-eval}
The evaluation function $\Evaluate_{\Write, \Merge}: \Clause \rightarrow \mathbb{S}$ calculates the resulting state of any valid formula by applying $\Write$ and $\Merge$ functions.
\end{definition}

The evaluation function $\Evaluate$ and the subscript $\Write$ and $\Merge$ may be omitted when there is no ambiguity.

\section{Strong Eventual Consistency}\label{sec:crdt}

With the computation model for distributed computing defined, we now use this model to prove the necessary and sufficient conditions for strong eventual consistency. 
First, note that each clause uses a particular set of inputs from the alphabet $\Inputs$.

\begin{definition}[Clause Input Set]\label{def:clause-input-set}
The input set of a clause $c$ is $\Inputset(c) = \{i \in \Inputs ~|~ i \text{ is used in } c \}$. 
\end{definition}
Note that a particular input value $i \in \Inputs$ may appear more than once in a clause $c$, but it will only appear once in the set $\Input(c)$.
In Example \ref{example:execution-dag}, $\Input(c_3)=\{i_1, i_2\}$ and $\Input(c_8)=\{i_1, i_2, i_3, i_4\}$, assuming that all these $i_k$ values are distinct.

We say that a replicated object is strongly eventually consistent if any two clauses that use the same input set yield the same state. 

\begin{definition}[Strong Eventual Consistency]\label{def:SEC} A replicated object $\mathcal{RO}=(\Write,\Merge,\Query,s_0)$ is strongly eventually consistent (SEC) iff,
$$
\forall c_1, c_2 \in \mathbb{C}, \Input(c_1)=\Input(c_2) \Rightarrow \Evaluate(c_1) = \Evaluate(c_2) 
$$
    
\end{definition}

This definition has some subtleties that require careful construction of the model. For example, if a clause uses a particular input $i \in \Inputs$ more than once, then for the replicated object to be SEC, the write and merge functions must be such that the result is the same as for clauses that use the input $i$ only once.
We will see that this requires careful definition of the input set $\Inputs$ in a problem-specific way.

Our definition of SEC is more formal, but consistent with those in the literature. For example, in Conflict-free Replicated Data Types~\cite{Shapiro:11:CRDT}, strong eventual consistency is defined as eventual delivery, strong convergence, and termination.
Eventual delivery corresponds to seeing the same inputs, termination corresponds to the clause being finite, and strong convergence corresponds to yielding the same final state.

\begin{mytheorem}[Necessary Conditions for Strong Eventual Consistency]\label{theorem:necessary-SEC}
If a replicated object $\mathcal{RO}=(\mathcal{W},\mathcal{M},\mathcal{Q},s_0)$ is strongly eventually consistent, then $\forall c, c_1, c_2 \in \Clause$, $\forall i \in \Inputs$, we have that
\begin{enumerate}
    \item $\Evaluate(c \Write i)=\Evaluate(c \Merge (s_0 \Write i))$,

    \item Merge is associative: $ \Evaluate((c \Merge c_1) \Merge c_2) = \Evaluate(c \Merge (c_1 \Merge c_2))$,

    \item Merge is commutative: $\Evaluate(c_1 \Merge c_2) =\Evaluate( c_2 \Merge c_1 )$, and

    \item Merge is idempotent: $\Evaluate(c_1 \Merge c_1) = \Evaluate(c_1) $.
\end{enumerate}
\end{mytheorem}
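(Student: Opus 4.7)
The plan is to prove all four conditions by a uniform strategy: for each claimed equality $\Evaluate(\alpha) = \Evaluate(\beta)$, I will show that $\alpha$ and $\beta$ are both legal clauses and that $\Input(\alpha) = \Input(\beta)$, and then invoke the hypothesis that $\mathcal{RO}$ is SEC (Definition 3.8) to conclude $\Evaluate(\alpha) = \Evaluate(\beta)$. The legality of each clause follows immediately from the Backus–Naur grammar in Definition 3.9: every expression built out of the sub-clauses $c, c_1, c_2$ with $\Write$, $\Merge$, and the constant $s_0$ is in $\Clause$. So the entire argument reduces to a bookkeeping check on input sets, plus one appeal to SEC per item.

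For condition (1), $\Input(c \Write i) = \Input(c) \cup \{i\}$ by Definition 3.7, and $\Input(c \Merge (s_0 \Write i)) = \Input(c) \cup \Input(s_0 \Write i) = \Input(c) \cup \{i\}$, since $s_0$ contributes no inputs. For condition (2), both $(c \Merge c_1) \Merge c_2$ and $c \Merge (c_1 \Merge c_2)$ have input set $\Input(c) \cup \Input(c_1) \cup \Input(c_2)$. For condition (3), both $c_1 \Merge c_2$ and $c_2 \Merge c_1$ have input set $\Input(c_1) \cup \Input(c_2)$. For condition (4), $\Input(c_1 \Merge c_1) = \Input(c_1) \cup \Input(c_1) = \Input(c_1)$. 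In each case, equality of input sets together with SEC delivers the required equality of evaluations.

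The only subtle step, and the one worth emphasizing in the write-up, is condition (4): idempotence hinges on the fact that $\Input(c)$ is defined as a \emph{set} of inputs used in $c$, not a multiset, so that the same input occurring twice in a clause is collapsed to a single element. This is exactly the modeling choice flagged in the remark after Definition 3.8, where the authors warn that SEC forces writes and merges to treat repeated occurrences of an input identically. Without the set-valued convention, $c_1 \Merge c_1$ would have a different ``input multiset'' from $c_1$ and idempotence would not drop out of SEC; with it, the proof is immediate. I expect this to be the only place a reader might pause, so I would spell the computation $\Input(c_1) \cup \Input(c_1) = \Input(c_1)$ out explicitly and point back to Definition 3.7.
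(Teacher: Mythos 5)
Your proof is correct and follows exactly the paper's own argument: both sides of each identity are legal clauses with equal input sets, so SEC forces equal evaluations. The paper states this in one line without the explicit input-set computations; your added bookkeeping, and in particular the observation that idempotence rests on $\Inputset(\cdot)$ being a set rather than a multiset, is a faithful elaboration of the same proof rather than a different route.
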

\begin{proof} For each of the equations, we let the clause on the left side of the equation be $c_l$ and the clause on the right side of the equation be $c_r$. In all four cases, $\Input(c_l) = \Input(c_r)$, and since the replicated object is strongly eventually consistent, we conclude that $\Evaluate(c_l) = \Evaluate(c_r)$.
This establishes the equality asserted in each of the four cases.
\end{proof}

Conditions (2) through (4) are illustrated in Figure~\ref{fig:ACI}.

\begin{figure}[ht]
    \centering
    
    \resizebox{0.8\columnwidth}{!}{
        \tikzset{every picture/.style={line width=0.75pt}} %

\begin{tikzpicture}[x=0.75pt,y=0.75pt,yscale=-1,xscale=1]

\draw   (348,82) -- (408,82) -- (408,122) -- (348,122) -- cycle ;
\draw    (408,92) -- (426,92) ;
\draw [shift={(428,92)}, rotate = 180] [color={rgb, 255:red, 0; green, 0; blue, 0 }  ][line width=0.75]    (10.93,-3.29) .. controls (6.95,-1.4) and (3.31,-0.3) .. (0,0) .. controls (3.31,0.3) and (6.95,1.4) .. (10.93,3.29)   ;
\draw    (328,92) -- (346,92) ;
\draw [shift={(348,92)}, rotate = 180] [color={rgb, 255:red, 0; green, 0; blue, 0 }  ][line width=0.75]    (10.93,-3.29) .. controls (6.95,-1.4) and (3.31,-0.3) .. (0,0) .. controls (3.31,0.3) and (6.95,1.4) .. (10.93,3.29)   ;
\draw    (328,112) -- (346,112) ;
\draw [shift={(348,112)}, rotate = 180] [color={rgb, 255:red, 0; green, 0; blue, 0 }  ][line width=0.75]    (10.93,-3.29) .. controls (6.95,-1.4) and (3.31,-0.3) .. (0,0) .. controls (3.31,0.3) and (6.95,1.4) .. (10.93,3.29)   ;
\draw   (348,152) -- (408,152) -- (408,192) -- (348,192) -- cycle ;
\draw    (408,162) -- (426,162) ;
\draw [shift={(428,162)}, rotate = 180] [color={rgb, 255:red, 0; green, 0; blue, 0 }  ][line width=0.75]    (10.93,-3.29) .. controls (6.95,-1.4) and (3.31,-0.3) .. (0,0) .. controls (3.31,0.3) and (6.95,1.4) .. (10.93,3.29)   ;
\draw    (328,162) -- (346,162) ;
\draw [shift={(348,162)}, rotate = 180] [color={rgb, 255:red, 0; green, 0; blue, 0 }  ][line width=0.75]    (10.93,-3.29) .. controls (6.95,-1.4) and (3.31,-0.3) .. (0,0) .. controls (3.31,0.3) and (6.95,1.4) .. (10.93,3.29)   ;
\draw    (328,182) -- (346,182) ;
\draw [shift={(348,182)}, rotate = 180] [color={rgb, 255:red, 0; green, 0; blue, 0 }  ][line width=0.75]    (10.93,-3.29) .. controls (6.95,-1.4) and (3.31,-0.3) .. (0,0) .. controls (3.31,0.3) and (6.95,1.4) .. (10.93,3.29)   ;
\draw   (50,82) -- (110,82) -- (110,122) -- (50,122) -- cycle ;
\draw    (30,92) -- (48,92) ;
\draw [shift={(50,92)}, rotate = 180] [color={rgb, 255:red, 0; green, 0; blue, 0 }  ][line width=0.75]    (10.93,-3.29) .. controls (6.95,-1.4) and (3.31,-0.3) .. (0,0) .. controls (3.31,0.3) and (6.95,1.4) .. (10.93,3.29)   ;
\draw    (30,112) -- (48,112) ;
\draw [shift={(50,112)}, rotate = 180] [color={rgb, 255:red, 0; green, 0; blue, 0 }  ][line width=0.75]    (10.93,-3.29) .. controls (6.95,-1.4) and (3.31,-0.3) .. (0,0) .. controls (3.31,0.3) and (6.95,1.4) .. (10.93,3.29)   ;
\draw   (160,82) -- (220,82) -- (220,122) -- (160,122) -- cycle ;
\draw    (220,92) -- (238,92) ;
\draw [shift={(240,92)}, rotate = 180] [color={rgb, 255:red, 0; green, 0; blue, 0 }  ][line width=0.75]    (10.93,-3.29) .. controls (6.95,-1.4) and (3.31,-0.3) .. (0,0) .. controls (3.31,0.3) and (6.95,1.4) .. (10.93,3.29)   ;
\draw    (110,92) -- (158,92) ;
\draw [shift={(160,92)}, rotate = 180] [color={rgb, 255:red, 0; green, 0; blue, 0 }  ][line width=0.75]    (10.93,-3.29) .. controls (6.95,-1.4) and (3.31,-0.3) .. (0,0) .. controls (3.31,0.3) and (6.95,1.4) .. (10.93,3.29)   ;
\draw    (140,112) -- (158,112) ;
\draw [shift={(160,112)}, rotate = 180] [color={rgb, 255:red, 0; green, 0; blue, 0 }  ][line width=0.75]    (10.93,-3.29) .. controls (6.95,-1.4) and (3.31,-0.3) .. (0,0) .. controls (3.31,0.3) and (6.95,1.4) .. (10.93,3.29)   ;
\draw   (50,152) -- (110,152) -- (110,192) -- (50,192) -- cycle ;
\draw    (30,162) -- (48,162) ;
\draw [shift={(50,162)}, rotate = 180] [color={rgb, 255:red, 0; green, 0; blue, 0 }  ][line width=0.75]    (10.93,-3.29) .. controls (6.95,-1.4) and (3.31,-0.3) .. (0,0) .. controls (3.31,0.3) and (6.95,1.4) .. (10.93,3.29)   ;
\draw    (30,182) -- (48,182) ;
\draw [shift={(50,182)}, rotate = 180] [color={rgb, 255:red, 0; green, 0; blue, 0 }  ][line width=0.75]    (10.93,-3.29) .. controls (6.95,-1.4) and (3.31,-0.3) .. (0,0) .. controls (3.31,0.3) and (6.95,1.4) .. (10.93,3.29)   ;
\draw   (160,152) -- (220,152) -- (220,192) -- (160,192) -- cycle ;
\draw    (220,162) -- (238,162) ;
\draw [shift={(240,162)}, rotate = 180] [color={rgb, 255:red, 0; green, 0; blue, 0 }  ][line width=0.75]    (10.93,-3.29) .. controls (6.95,-1.4) and (3.31,-0.3) .. (0,0) .. controls (3.31,0.3) and (6.95,1.4) .. (10.93,3.29)   ;
\draw    (110,182) -- (158,182) ;
\draw [shift={(160,182)}, rotate = 180] [color={rgb, 255:red, 0; green, 0; blue, 0 }  ][line width=0.75]    (10.93,-3.29) .. controls (6.95,-1.4) and (3.31,-0.3) .. (0,0) .. controls (3.31,0.3) and (6.95,1.4) .. (10.93,3.29)   ;
\draw    (140,162) -- (158,162) ;
\draw [shift={(160,162)}, rotate = 180] [color={rgb, 255:red, 0; green, 0; blue, 0 }  ][line width=0.75]    (10.93,-3.29) .. controls (6.95,-1.4) and (3.31,-0.3) .. (0,0) .. controls (3.31,0.3) and (6.95,1.4) .. (10.93,3.29)   ;
\draw   (538,120) -- (598,120) -- (598,160) -- (538,160) -- cycle ;
\draw    (598,130) -- (616,130) ;
\draw [shift={(618,130)}, rotate = 180] [color={rgb, 255:red, 0; green, 0; blue, 0 }  ][line width=0.75]    (10.93,-3.29) .. controls (6.95,-1.4) and (3.31,-0.3) .. (0,0) .. controls (3.31,0.3) and (6.95,1.4) .. (10.93,3.29)   ;
\draw    (518,130) -- (536,130) ;
\draw [shift={(538,130)}, rotate = 180] [color={rgb, 255:red, 0; green, 0; blue, 0 }  ][line width=0.75]    (10.93,-3.29) .. controls (6.95,-1.4) and (3.31,-0.3) .. (0,0) .. controls (3.31,0.3) and (6.95,1.4) .. (10.93,3.29)   ;
\draw    (518,150) -- (536,150) ;
\draw [shift={(538,150)}, rotate = 180] [color={rgb, 255:red, 0; green, 0; blue, 0 }  ][line width=0.75]    (10.93,-3.29) .. controls (6.95,-1.4) and (3.31,-0.3) .. (0,0) .. controls (3.31,0.3) and (6.95,1.4) .. (10.93,3.29)   ;

\draw (369,94.4) node [anchor=north west][inner sep=0.75pt]    {$\mathcal{M}$};
\draw (311,82.4) node [anchor=north west][inner sep=0.75pt]    {$c_{1}$};
\draw (311,102.4) node [anchor=north west][inner sep=0.75pt]    {$c_{2}$};
\draw (431,82.4) node [anchor=north west][inner sep=0.75pt]    {$c_{3}$};
\draw (369,164.4) node [anchor=north west][inner sep=0.75pt]    {$\mathcal{M}$};
\draw (311,152.4) node [anchor=north west][inner sep=0.75pt]    {$c_{2}$};
\draw (311,172.4) node [anchor=north west][inner sep=0.75pt]    {$c_{1}$};
\draw (431,152.4) node [anchor=north west][inner sep=0.75pt]    {$c_{3}$};
\draw (71,94.4) node [anchor=north west][inner sep=0.75pt]    {$\mathcal{M}$};
\draw (13,82.4) node [anchor=north west][inner sep=0.75pt]    {$c_{1}$};
\draw (13,102.4) node [anchor=north west][inner sep=0.75pt]    {$c_{2}$};
\draw (181,94.4) node [anchor=north west][inner sep=0.75pt]    {$\mathcal{M}$};
\draw (123,102.4) node [anchor=north west][inner sep=0.75pt]    {$c_{3}$};
\draw (243,82.4) node [anchor=north west][inner sep=0.75pt]    {$c_{4}$};
\draw (71,164.4) node [anchor=north west][inner sep=0.75pt]    {$\mathcal{M}$};
\draw (123,152.4) node [anchor=north west][inner sep=0.75pt]    {$c_{1}$};
\draw (13,152.4) node [anchor=north west][inner sep=0.75pt]    {$c_{2}$};
\draw (181,164.4) node [anchor=north west][inner sep=0.75pt]    {$\mathcal{M}$};
\draw (13,174.4) node [anchor=north west][inner sep=0.75pt]    {$c_{3}$};
\draw (243,152.4) node [anchor=north west][inner sep=0.75pt]    {$c_{4}$};
\draw (559,132.4) node [anchor=north west][inner sep=0.75pt]    {$\mathcal{M}$};
\draw (501,120.4) node [anchor=north west][inner sep=0.75pt]    {$c_{1}$};
\draw (501,140.4) node [anchor=north west][inner sep=0.75pt]    {$c_{1}$};
\draw (621,120.4) node [anchor=north west][inner sep=0.75pt]    {$c_{1}$};
\draw (430.4,144.05) node [anchor=north west][inner sep=0.75pt]  [rotate=-269.85]  {${\displaystyle =}$};
\draw (242.4,144.05) node [anchor=north west][inner sep=0.75pt]  [rotate=-269.85]  {${\displaystyle =}$};
\draw (89,222) node [anchor=north west][inner sep=0.75pt]   [align=left] {Associative};
\draw (335,218) node [anchor=north west][inner sep=0.75pt]   [align=left] {Commutative};
\draw (519,214) node [anchor=north west][inner sep=0.75pt]   [align=left] {Idempotent};

\end{tikzpicture}
    }
    \caption{A Diagram Showing the ACI properties of $\Merge$}
    \label{fig:ACI}
\end{figure}

\begin{mytheorem}[Sufficient Conditions for Strong Eventual Consistency]\label{theorem:sufficient-SEC}
If a replicated object $\mathcal{RO}=(\mathcal{W},\mathcal{M},\mathcal{Q},s_0)$ satisfies $\forall c, c_1, c_2 \in \Clause$, $\forall i \in \Inputs$:

\begin{enumerate}
    \item $\Evaluate(c \Write i)= \Evaluate(c \Merge ( s_0 \Write i))$
    \item Merge is associative: $ \Evaluate((c \Merge c_1) \Merge c_2) = \Evaluate(c \Merge (c_1 \Merge c_2))$
    \item Merge is commutative: $\Evaluate(c_1 \Merge c_2) = \Evaluate(c_2 \Merge c_1 )$
    \item Merge is idempotent: $\Evaluate(c_1 \Merge c_1)  = \Evaluate(c_1)$
\end{enumerate}
Then the replicated object $\mathcal{RO}$ is strongly eventually consistent.
\end{mytheorem}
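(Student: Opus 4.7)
The plan is to prove SEC by reducing every clause to a canonical normal form determined solely by its input set. Fixing any total order on $\Inputs$, I would define the canonical clause $\hat{c}(S)$ for a finite subset $S \subseteq \Inputs$ to be the clause $s_0$ when $S$ is empty and the left-associated merge $(s_0 \Write i_1) \Merge (s_0 \Write i_2) \Merge \cdots \Merge (s_0 \Write i_n)$ when $S = \{i_1, \ldots, i_n\}$ is listed in that fixed order. If I can establish the key lemma that $\Evaluate(c) = \Evaluate(\hat{c}(\Input(c)))$ for every $c \in \Clause$, then SEC follows at once: whenever $\Input(c_1) = \Input(c_2)$ both clauses reduce to the very same canonical clause, so $\Evaluate(c_1) = \Evaluate(c_2)$.

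The lemma itself I would prove by structural induction on $c$ following the grammar in Definition \ref{def:clause}. The base case $c = s_0$ is immediate since $\Input(s_0) = \emptyset$ and $\hat{c}(\emptyset) = s_0$. For the write case $c = c' \Write i$, property (1) gives $\Evaluate(c' \Write i) = \Evaluate(c' \Merge (s_0 \Write i))$, and the inductive hypothesis lets me replace $c'$ by $\hat{c}(\Input(c'))$; it then remains to argue that adjoining the term $(s_0 \Write i)$ to that canonical form yields $\hat{c}(\Input(c') \cup \{i\})$, which is pure ACI bookkeeping: associativity and commutativity to reorder the merged terms into the fixed order, and idempotence to delete the duplicate copy of $(s_0 \Write i)$ in the event that $i \in \Input(c')$. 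For the merge case $c = c_1 \Merge c_2$, the inductive hypothesis turns the clause into $\hat{c}(\Input(c_1)) \Merge \hat{c}(\Input(c_2))$, which is then flattened, reordered, and de-duplicated by the same three rewriting moves to produce $\hat{c}(\Input(c_1) \cup \Input(c_2)) = \hat{c}(\Input(c))$.

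The one subtlety where I would be careful is the status of $s_0$ itself, since properties (2)--(4) do not on their own assert that $s_0$ is a neutral element for $\Merge$. The fix is to state once, as a small absorption fact, that property (1) specialized at $c = s_0$ gives $\Evaluate(s_0 \Merge (s_0 \Write i)) = \Evaluate(s_0 \Write i)$. This lets each inductive case absorb any stray $s_0$ factor that arises when one side of a merge has an empty input set while the other side contains at least one write-term, and it also handles the edge case $\Evaluate(s_0 \Merge s_0) = \Evaluate(s_0)$ via idempotence. Beyond this absorption bookkeeping I expect no real obstacle: every remaining step is a direct rewrite using exactly one of the four hypothesized identities, and the proof collapses to a short, essentially syntactic, normalization argument.
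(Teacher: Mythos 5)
Your proposal is correct and follows essentially the same route as the paper's proof: both use property (1) to replace every write $c' \Write i$ by $c' \Merge (s_0 \Write i)$ and then invoke associativity, commutativity, and idempotence of $\Merge$ to normalize any clause to a canonical merge of the singleton clauses $s_0 \Write i$ determined by its input set. Your version is in fact somewhat more careful than the paper's two-step sketch, since you organize the normalization as a structural induction on the clause grammar and explicitly handle the absorption of stray $s_0$ factors via property (1) specialized at $c = s_0$, a detail the paper passes over silently.
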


\begin{proof}
We prove this by transforming any two clauses with the same clause input set into the same form, and thus the evaluations of the two clauses are equal.

(Step 1) First we transform all $\Write$ in $c_1$ and $c_2$ using property (1), which implies that $s \Write i=s \Merge(s_0 \Write i)$. We denote $s_i = s_0 \Write i$. Now $c_1$ and $c_2$ consist only of $\Merge$ operations on $s_0, s_{i_1}, s_{i_2}, s_{i_3} ...$

(Step 2) Because the $\Merge$ operation is ACI (associative, commutative and idempotent), we can transform both $c_1$ and $c_2$ into the form of $s_0 \Merge s_{i_1} \Merge s_{i_2} \Merge s_{i_3} ...$. If $\Input(c_1) = \Input(c_2)$, then the transformed $c_1$ and $c_2$ can be made identical using property (3) by just putting the inputs in the same order.
Consequently,  $\Input(c_1) = \Input(c_2)$ implies that $\Evaluate(c_1)=\Evaluate(c_2)$, establishing that the replicated object is SEC.
\end{proof}

\begin{mycorollary}[Necessary and Sufficient Conditions for Strong Eventual Consistency]\label{corollary:necessary-sufficient-SEC}
A replicated object $\mathcal{RO}=(\mathcal{W},\mathcal{M},\mathcal{Q},s_0)$ 
is strongly eventually consistent 
$\Leftrightarrow$ 
$\forall c \in \Clause$, $\forall i \in \Inputs$, $\Evaluate(c \Write i)= \Evaluate(c \Merge ( s_0 \Write i))$, and $\Merge$ is associative, commutative, and idempotent.

\begin{proof}
This follows directly from Theorems \ref{theorem:necessary-SEC} and \ref{theorem:sufficient-SEC}.
\end{proof}

\end{mycorollary}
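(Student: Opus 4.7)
The plan is to handle the corollary as a straightforward biconditional assembled from the two preceding theorems; there is essentially no new mathematical content to produce, only a bookkeeping step that packages the two implications together. I would split the statement into its forward ($\Rightarrow$) and backward ($\Leftarrow$) directions and dispatch each with a single sentence citation.

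For the forward direction, I would assume that $\mathcal{RO} = (\mathcal{W}, \mathcal{M}, \mathcal{Q}, s_0)$ is strongly eventually consistent and appeal directly to Theorem \ref{theorem:necessary-SEC}, which yields all four conclusions: the write-as-merge identity $\Evaluate(c \Write i) = \Evaluate(c \Merge (s_0 \Write i))$ for every $c \in \Clause$ and every $i \in \Inputs$, together with associativity, commutativity, and idempotence of $\Merge$ (under $\Evaluate$). For the backward direction, I would assume the four listed properties and invoke Theorem \ref{theorem:sufficient-SEC} to conclude that $\mathcal{RO}$ is SEC. Combining the two implications gives the stated equivalence.

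The main, and essentially only, obstacle is one of presentation rather than mathematics: the corollary asserts a single conjunction of conditions on the right-hand side, while Theorems \ref{theorem:necessary-SEC} and \ref{theorem:sufficient-SEC} each enumerate the same four items separately, so I would make sure the quantifiers ($\forall c \in \Clause$, $\forall i \in \Inputs$) are properly shared across all four properties and that the identification between ``property (1)'' and the write-as-merge identity, and between ``properties (2)--(4)'' and the ACI conditions on $\Merge$, is made explicit. With that alignment stated, the proof reduces to the one-line citation already given in the excerpt, so I expect no further work is required.
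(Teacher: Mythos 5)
Your proposal is correct and matches the paper's proof exactly: the forward direction is Theorem \ref{theorem:necessary-SEC} and the backward direction is Theorem \ref{theorem:sufficient-SEC}, which is precisely the one-line citation the paper gives. The extra care you take in aligning the quantifiers and the four enumerated properties is harmless but adds nothing beyond what the paper already assumes.
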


\section{Consistency}\label{sec:consistency}

In the previous sections, we only discussed the state of a distributed program. But in most applications, what we care about is the program output. Specifically, does the distributed program implement the problem that we are aiming to solve? To answer this question, we first have to formally define the problem that we are trying to solve.

A ``problem'' defines a result that should be given by an implementation of the problem after being provided with some inputs.
We distinguish an individual input value from the total input, a collection of input values.
Each individual input value may be part of a sequence, in which case the input values have a semantic ordering,
and the total input is a sequence.
Or, for other problems, the total input may be a collection of unordered input values.
The total input could even be a mix, such as a collection of sequences.

In all of these cases, each individual input will be drawn from an input alphabet $\Inputs$,
and a totality of inputs provided to the problem will be a subset $x \subseteq \Inputs$.
For some problems, not every subset of $\Inputs$ is a legal total input.
Hence, a part of the definition of a problem is a set $\X \subseteq 2^\Inputs$,
a set of legal subsets of $\Inputs$ that are legal total inputs.
For example, if inputs are semantically ordered, then a total input may be represented by a function
of the form $f\colon D_n \to Y$, where $D_n = \{0, 1, \cdots n \}$ for some $n \in \mathbb{N}$ and some set ${Y}$.
In this case, the input alphabet is:
$$
\Inputs = \mathbb{N} \times Y,
$$
and the set of legal total inputs is
$$
\X = \{ \text{graph}(f) ~|~ f\colon D_n \to Y, n \in \mathbb{N} \}
$$
where the graph of a function $f$ is the set of ordered pairs $(x,y)$ that $f(x)=y$.
So, for example, if $Y = \mathbb{N}$, then a legal total input is
$$
x = \{(0,0), (1,2), (2,4), (3,6)\} \in \X .
$$
Illegal input examples are $x = \{(0,0), (0,1)\}$ and $x = \{(0, 0), (2,4)\}$ because these are not graphs of function of the form $f\colon D_n \to Y$. In summary, we have the following definition:

\begin{definition}[Total Input]
    Given a set $\Inputs$, the input alphabet, a \emph{total input} is a subset $x \subseteq \Inputs$. The set of legal total inputs for a problem is $\X \subseteq 2^\Inputs$, a set of subsets of $\Inputs$.
\end{definition}

\begin{definition}[Problem]\label{def:problem}
A problem is a tuple $(\Problem, \X, \Values, \leq)$, where $\Problem$ is a function 
$$
\Problem:\X \rightarrow \Values,
$$
$\X \subseteq 2^\Inputs$ is the set of legal total inputs, $\Values$ is the set of output values, and $\leq$ is a partial-order relation on $\Values$.
\end{definition}

\begin{example}[Distributed Deadlock Detection]\label{example:deadlock} 

Here we look at the example of distributed deadlock detection from \cite{hellerstein_keeping_2020} and formally formulate it as a problem. 
The problem of distributed deadlock detection is essentially the problem of loop detection in a graph. Each node represents a thread waiting for a mutual exclusion lock, and each directed edge represents a dependency, where the start of the edge is a thread that must release the lock before the end of the edge can acquire it. Our goal is to identify loops in this graph.

\begin{figure}[ht]
    \centering
    \includegraphics[width=0.7\columnwidth]{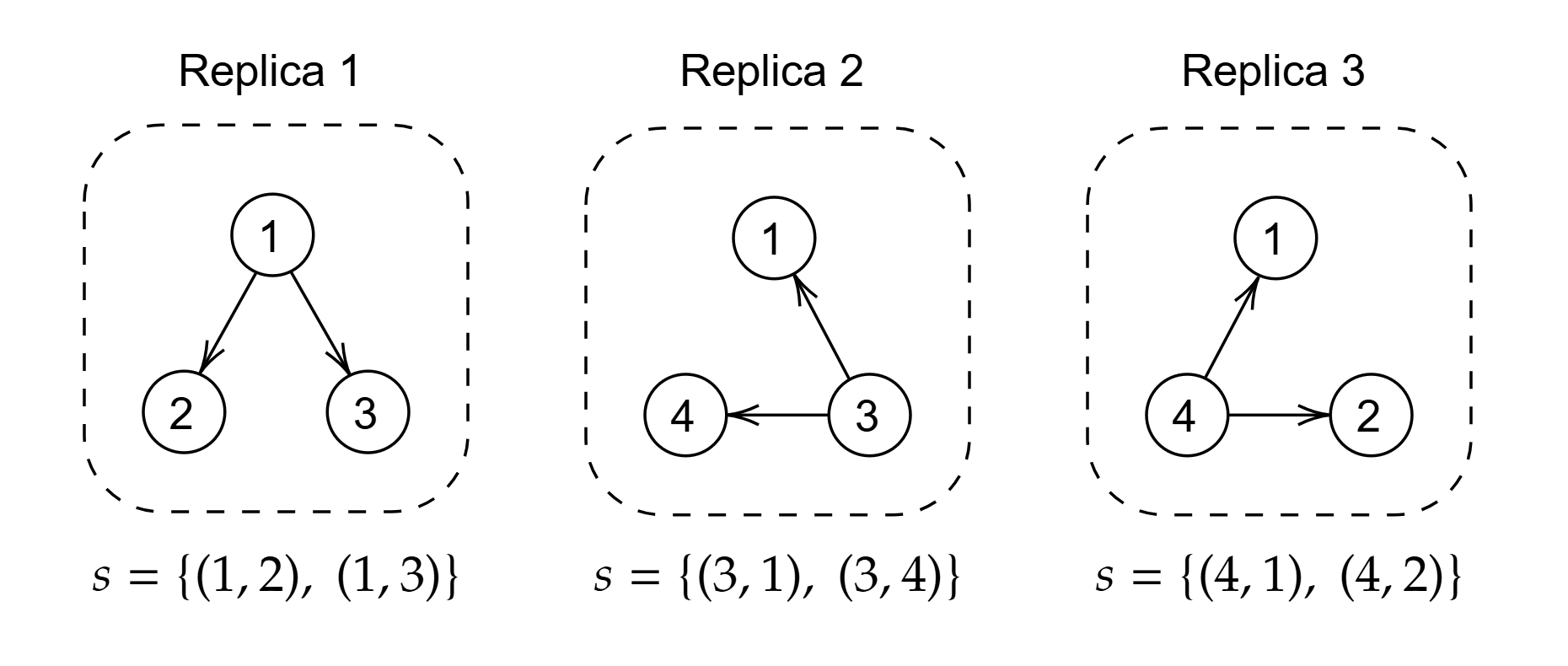}
    \caption{Distributed Deadlock Detection Example}
    \label{fig:distributed-deadlock-detection}
\end{figure}

The input alphabet $\Inputs$ is the set of all edges: $\Inputs = \{(i,j) | i,j \in \mathbb{N}\}  = \mathbb{N} \times \mathbb{N}$. An edge from node $i$ to node $j$ is $e = (i, j)$. 
The problem's domain in this particular definition is $\X = 2^\Inputs$. The output of the problem is the set of all edges that are in a loop, so the co-domain of the $\Problem$ function is the same, $\mathbb{V} = 2^\Inputs$, the set of all subsets of $\mathbb{N} \times \mathbb{N}$. For example, as shown in Figure~\ref{fig:distributed-deadlock-detection}, if a total input is $x = \{i_1=(1,2), i_2=(1,3), i_3=(3,1), i_4=(3,4), i_5=(4,1), i_6=(4,2)\}$, then the problem output is $$\Problem(\{i_1,i_2,i_3,i_4,i_5,i_6\}) = \{ (1,3), (3,1), (3,4), (4,1) \}.$$
A natural choice for the partial order relation $\leq$ on the set of output values $\Values$, the role of which we discuss next, is the subset relation $\subseteq$.
\end{example}

With the definition of problems, we can now define consistency in the context of a problem's outputs. For a problem with codomain $\mathbb{V}$, consistency is defined as a non-contradiction relation on $\mathbb{V}$. Intuitively, for any two problem outputs, if a problem output does not contradict another problem output, we say that the latter problem output is consistent with the first output.

\begin{definition}[Non-contradiction and Consistency]
Given a partial order $\leq$ defined on the set of outputs $\mathbb{V}$ for problem $(\Problem, \X, \Values, \leq)$, and given
$v_1, v_2 \in \mathbb{V} $, then we say $v_2$ is consistent with $v_1$ if $v_1 \leq v_2$. Alternatively, we say that $v_2$ does not contradict $v_1$.
\end{definition}

Consistency has to be a partial order; i.e., it satisfies reflexivity ($v_1 \leq v_1$), anti-symmetry ($v_1 \leq v_2 \land v_2 \leq v_1 \Rightarrow v_1=v_2$), and transitivity ($v_1 \leq v_2 \land v_2 \leq v_3 \Rightarrow v_1 \leq v_3$). Reflexivity is a natural result of consistency because an output never contradicts itself. Transitivity is also natural for consistency because if $v_1$ is consistent with $v_2$ and $v_2$ is consistent with $v_3$, it's natural that $v_1$ is consistent with $v_3$. We will discuss anti-symmetry in definition \ref{def:consistent-each-other}.

\begin{definition}[Strong Consistency] \label{def:consistent-each-other}
$\forall v_1, v_2 \in \mathbb{V}$, we say $v_1$ and $v_2$ are strongly consistent if they are consistent with each other, i.e. $v_1 \leq v_2$ and $v_2 \leq v_1$. Because $\leq$ is a partial order, this implies that $v_1=v_2$ because of anti-symmetry.
\end{definition}

\begin{example}[Distributed Deadlock Detection]\label{example:deadlock2} 

In the problem of deadlock detection in Example \ref{example:deadlock}, the consistency partial order is the subset order on $\mathbb{V}$. An output $v_2$ does not contradict a previous output $v_1$ if $v_2$ contains all the edges that $v_1$ contains.

For this example, consistency has the practical implication of ``safe to report deadlock.'' If all future outputs, given more inputs, are guaranteed to be consistent with the current output, then we can safely act on the current output. For example, the system can report to the user immediately that a deadlock has been detected, with the guarantee that this deadlock will not be resolved later.
\end{example}

\section{Coordination-free}\label{sec:coordination-free}

With the definition of problems, we now look into the implementations of problems. An implementation consists of two parts: a coordination function $\CF$ and a replicated object $\mathcal{RO}$. The two parts correspond to the two basic components in building distributed programs: the coordination layer and the computation layer.

\begin{definition}[Coordination Function]
A coordination function $\CF:\X \rightarrow 2^\Clause$ is a function that, given any total input, produces a set of clauses.
\end{definition}

In essence, a coordination function describes all possible execution traces given a total set of inputs. This formally defines the behavior of the coordination layer by limiting the clauses that could be produced. For example, an extreme case of a coordination function is one that never produces a clause with a merge operation $\Merge$. This coordination function represents a coordination layer that would only allow the implementation to run on a single machine.

\begin{definition}[Implementation]
An implementation of a problem $(\Problem, \X, \Values, \leq)$ consists of a coordination function $\CF:\X \rightarrow 2^\Clause$ and a replicated object $\mathcal{RO}=(\Write,\Query, \Merge, s_0)$ such that 
$$
\forall x \in \X, \forall c \in \CF(x), \Problem(x) = \Query(\Evaluate_{\Write, \Merge}(c)) 
$$
\end{definition}

This says that, for an implementation to actually implement a problem, any clause permitted by the coordination function must, when evaluated by the replicated object, yield the problem's answer.

\begin{example}[Distributed Deadlock Detection]\label{example:deadlock3}
  Here we present an implementation for the problem of distributed deadlock detection in Example \ref{example:deadlock}. To formally describe an implementation, we describe a coordination function $\CF$ and a replicated object ${\mathcal{RO}=(\Write,\Merge,\Query,s_0)}$. 

\begin{itemize}
\itemsep=0ex
    \item $\CF: \forall x \in \X, \CF(x)=\{c \in \Clause \mid \Inputset(c) = x\}$
    \item $s_0=\emptyset$
    \item $\Write(s, i)= s \cup {i}$
    \item $\Merge(s_1,s_2) = s_1 \cup s_2$
    \item $\Query(s) = \{ e_0 \in s ~|~ \exists e_1, e_2,..., e_n \in s, \forall i \in 0,1,...,n-1, v(e_i) = u(e_{i+1}) \mbox{ and } v(e_n) = u(e_0) \}$ where $u(e)$ and $v(e)$ are the source and destination nodes of the edge $e$.
\end{itemize}
The state of each replica is a set of edges, and with each new $\Write$ and $\Merge$, the graph known by each replica grows. The query function $\Query(s)$ outputs the set of edges in the state $s$ that are part of a loop in the graph represented by $s$.

The coordination function $\CF$ here allows any clause with the same set of inputs as the total input. We will see later that this coordination function is confluent.

\end{example}

Confluence is a property of an implementation, where the coordination function of the implementation allows any distribution, reordering, and repeating of individual inputs within the total input. Such a coordination function would be easy to implement practically.

\begin{definition}[Confluence]\label{def:confluence} 
An implementation is confluent if 
$$
\forall x \in \X \text{ and } c \in \Clause \text{ where } \Inputset(c) = x, \text{ we have } c \in \CF(x).
$$
\end{definition}
This implies that for any two clauses with identical input sets, the resulting query outputs will be identical. The outcome is independent of both the ordering and distribution of the inputs, meaning that no coordination is required to limit the selection of clauses.

\begin{example}[Distributed Garbage Collection]\label{example:garbage-collection}
Let's look at how to implement distributed garbage collection in our computation model. Distributed garbage collection can also be recognized as a graph problem: each node in the graph represents an object, and each directed edge in the graph represents an object reference. Any object that is not reachable from the root node (the one with number $0$, for example) can be garbage collected. The edges and nodes may be distributed across replicas, as shown in the example in Figure~\ref{fig:distributed-garbage-collection}.

\begin{figure}[ht]
    \centering
    \includegraphics[width=0.7\columnwidth]{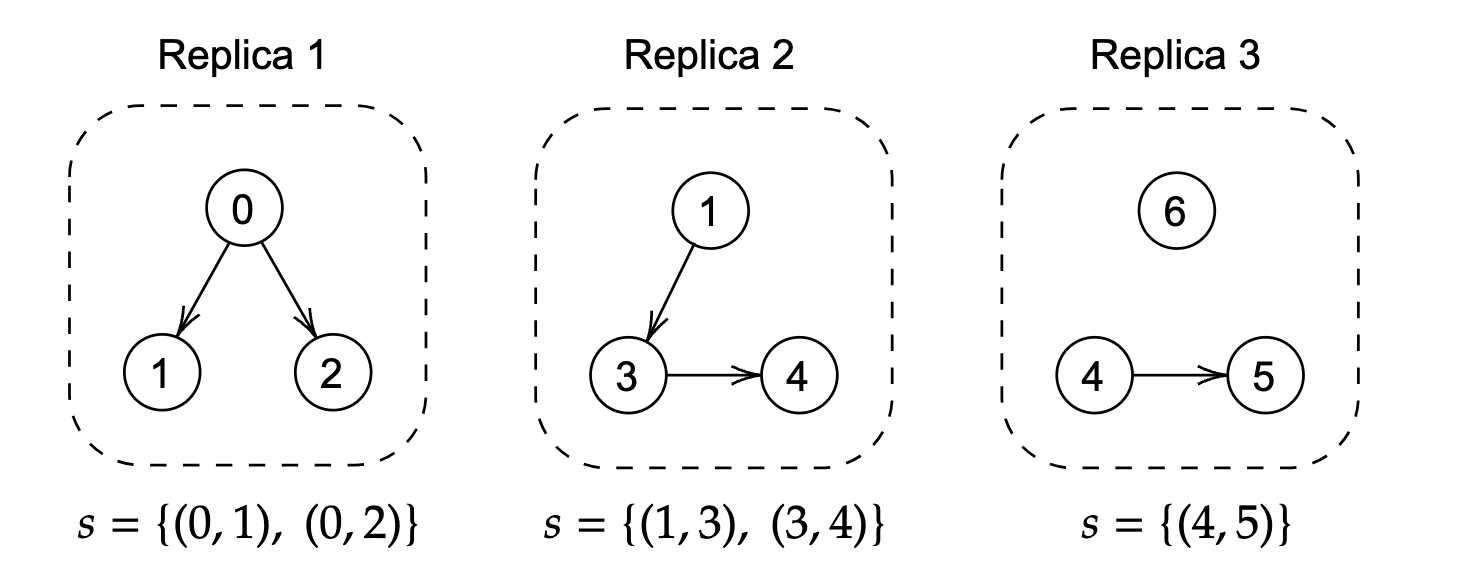}
    \caption{Distributed Garbage Collection Example}
    \label{fig:distributed-garbage-collection}
\end{figure}

An implementation of distributed garbage collection can have the same $\CF$, $s_0$, $\Write$, and $\Merge$ functions as the distributed deadlock problem in the implementation of Example \ref{example:deadlock3}. The only difference between the two problems is the  query function $\Query$: in distributed garbage collection, the $\Query$ function outputs the set of nodes that are not in the transitive closure of the root object.

The consistency partial order for this problem can also be represented by a subset relation $\subseteq$: $\forall v_1, v_2 \in \mathbb{V}, v_1 \leq v_2 \iff v_1 \subseteq v_2$. This consistency partial order definition has practical applications: it is safe to garbage collect all objects in set $v_1$ if for all future outputs $v_2$, $v_1 \subseteq v_2$. To be more concrete, if we can guarantee that the current set of objects that we want to garbage collect will never be referenced by the root node with new edges learned by the replica, then it is safe to garbage collect the current set of objects that are not referenced by the root node. 

However, we cannot give such guarantees in the problem of garbage collection if it is always possible to later get a new input that creates a path from the root node to a node previously marked as safe to collect. This leads to the root object referencing an object that has already been garbage collected and further leads to program faults. 
In order for the implementation to give a correct result, the coordination function has to guarantee that there will be no future inputs. 
We will later see that the fundamental reason is that the problem function $\Problem$ for garbage collection is not monotonic given the definition of consistency and the poset $(\mathbb{V}, \subseteq)$ .

\end{example}

In the example of distributed garbage collection, we see there is another form of coordination -- the guarantee of having received all inputs. For an implementation to be coordination-free, it has to not rely on this guarantee. Intuitively, an implementation needs to be correct even if it only has partial information about the inputs. We will formally define this as ``consistent under partition.''

First we need to formally define ``partition'' in our model of computation. This is especially intuitive in the form of clauses defined in definition~\ref{def:clause}. Every clause represents an execution trace, and a partition of the execution would be a ``sub-clause'' of the entire clause. For example, consider $c = c_1 \Merge c_2$. If $c$ is the entire clause, then $c_1$ and $c_2$ are both ``partitions'': $c_1$ represents the replica before receiving the remote state, and $c_2$ represents the remote replica. They are both partitions in the sense that they cannot guarantee that they have global information. If we take a step further, $c$ could be a partition for some other clause as well, because $c$ cannot guarantee that it has global information. Hence, ``partition'' is a relative notion -- a clause is a partition of some other clause. Specifically, this presents a partial order of clauses.

\begin{definition}[Partial Order of Clauses]
We define a partial order of clauses $\preceq$. $\forall c_1, c_2 \in \mathbb{C}$, $ c_1 \preceq c_2$ if any of the following is true:

\begin{itemize}
\itemsep=1ex
    \item $c_1 = c_2$,
    \item $\exists i \in \Inputs, c_2 = c_1 \Write i, $
    \item $\exists c \in \Clause, c_2 = c_1 \Merge c$
    \item $\exists c \in \Clause, c_2 = c \Merge c_1$
    \item $\exists c \in \Clause, c_1 \preceq c \mbox{ and } c \preceq c_2$
\end{itemize}
    
\end{definition}

If $c_1 \preceq c_2$, then $c_1$ is a partition of $c_2$. If the current execution trace on a replica is $c_1$, it could be a partition of some larger $c_2$; this also means that it could be a partition of another clause $c_3$, where $c_1 \preceq c_3$, or other clauses that include $c_1$. Any clause could be under partition because any clause cannot give the guarantee of having global information. This is where consistency under partition becomes important: the ability to guarantee consistency without the guarantee of global information.

\begin{definition}[Consistent Under Partition]\label{def:consistent-partition}
An implementation is \emph{consistent under partition} iff
$$
\begin{aligned}
& \forall x \in \X,  \forall c \in \CF(x), \\
& \forall c_0 \in \Clause \text{ such that } c_0 \preceq c, \Query(\Evaluate(c_0)) \leq \Query(\Evaluate(c))
\end{aligned}
$$
\end{definition}

An implementation that is consistent under partition is able to give consistent query results immediately without any coordination. Combined with confluence, such a coordination function is extremely simple to implement in practice: any replica can execute the write, merge, and query operations immediately upon request without consulting any other replica, while guaranteeing consistency at the same time. We call such implementations coordination-free:

\begin{definition}[Coordination-free]\label{def:coordination-free} 
An implementation is coordination-free if it is confluent and consistent under partition.
\end{definition}

\section{CALM Theorem}\label{sec:mono}

The CALM theorem (Consistency as Logical Monotonicity), proposed by Hellerstein and Alvaro~\cite{hellerstein_declarative_2010,hellerstein_keeping_2020}, proposes that a problem has a consistent, coordination-free distributed implementation if and only if it is monotonic. A proof of the theorem using transducer networks has been given by Ameloot, et al.~\cite{ameloot_relational_2010,ameloot_weaker_2016}. In this work, we give a more general proof of the CALM theorem from a distributed computation perspective rather than from a database point of view.

Hellerstein and Alvaro~\cite{hellerstein_keeping_2020} define consistency as confluence: ``...an operation on a single machine is confluent if it produces the same set of output responses for any non-deterministic ordering and batching of a set of input requests.'' But in the previous section, we see that confluence is actually a property implied by coordination-freeness. For an implementation to be coordination-free, it has to be confluent. For a coordination-free implementation, different replicas may receive events in different orders, and if the implementation is not confluent, different replicas will generate different results, violating the consistency that we desire. Thus we argue that consistency should not be defined as confluent, and confluence should be an aspect of coordination-freeness.

\begin{example}[Distributed Garbage Collection]\label{example:garbage-collection2}
In Example~\ref{example:garbage-collection}, we mentioned that the fundamental reason that the problem of distributed garbage collection does not have a coordination-free implementation is that it is not monotonic. In the problem of distributed garbage collection, the problem $\Problem\colon \X \to \Values$ has as its domain $\X$ and co-domain $\Values$ sets of edges, both endowed with a subset order to form a poset.
The problem definition is that $\Problem(x)$ is a subset of nodes that are not reachable in the graph from some root node, given the edges in $x$. Consider a graph with two nodes, $n_0$ and $n_1$, where $n_0$ is the root node. Then
$$
\Problem(\{(0,0), (1,1)\}) = \{ 1 \}
$$
and
$$
\Problem(\{(0,0), (1,1), (0,1)\}) = \emptyset .
$$
This function is not monotonic because $\{(0,0), (1,1)\} \subseteq \{(0,0), (1,1), (0,1)\}$, but $\{ 1 \} \nsubseteq \emptyset$.
\end{example}

Formally, we define monotonicity as the property of a function from a partially ordered set to another that preserves order. This aligns with the definition of monotonicity in order theory. To be more concrete, monotonicity is a property of the function $\Problem$ that preserves the order of input sets in the problem output. When more inputs are presented to the problem, the problem only yields outputs that are consistent with previous results.

\begin{definition}[Monotonic Problem]
A problem $(\Problem, \X, \Values, \leq)$ is monotonic iff
$$
\forall x_1, x_2 \in \X,
x_1 \subseteq x_2 \Rightarrow \Problem(x_1) \leq \Problem(x_2)
$$
\end{definition}

With this definition of monotonicity, we have carefully defined every aspect of the CALM theorem. We can now prove the theorem.

\begin{mytheorem}[Monotonicity $\Rightarrow$ Coordination-free]\label{theorem:monotonic-cf}
A monotonic problem has a consistent, coordination-free distributed implementation.
\begin{proof}
For any problem function $\Problem: \X \rightarrow \Values$, consider the following implementation:
\begin{enumerate}
\itemsep=0ex
    \item $\CF: \forall x \in \X, \CF(x)=\{c \in \Clause \mid \Inputset(c) = x\}$
    \item $\States = \X$, the set of possible states is same as the problem domain.
    \item $s_0=\emptyset$
    \item $\Write(s, i) = s \cup \{i\}$
    \item $\Merge(s_1, s_2) = s_1 \cup s_2$
    \item $\Query(s) = \Problem(s)$.
\end{enumerate}

(Step 1) We prove that this implementation is correct, meaning that it is a valid solution to the problem $\Problem$.

$\forall x \in \X, \forall c \in \CF(x), \Inputset(c)=x$ because of (1) above.

Since $\Write(s, i) = s \cup \{i\}$, $\Merge(s_1, s_2) = s_1 \cup s_2$, it is apparent that $\Evaluate_{\Write, \Merge}(c) = \Inputset(c)$. Then $\Query(\Evaluate(c)) = \Query(\Inputset(c)) = \Query(x) = \Problem(x)$.

(Step 2) We prove that this implementation is confluent.
Because in this implementation, $\CF(x)$ satisfies $\forall x \in \X, c \in \Clause, \Inputset(c) = x \Rightarrow c \in \CF(x)
$ in definition \ref{def:confluence}, this implementation is confluent. 

(Step 3) We prove that this implementation is confluent under partition. Because of (1) above,
$ \forall c \in \CF(x), \Inputset(c)=x $. 
$\forall c_0 \preceq c$, we define $x_0 = \Inputset(c_0)$, so, by definition, $x_0 = \Inputset(c_0) \subseteq \Inputset(c) = x$. Because of (1), $c_0 \in \CF(x_0)$. Since the implementation implements the problem $\Problem$, $\Query(\Evaluate(c_0)) = \Problem(x_0)$ and $\Query(\Evaluate(c)) = \Problem(x)$.
Because the problem is monotonic, $x_0 \subseteq x \Rightarrow \Problem(x_0) \leq \Problem(x)$. So $\Query(\Evaluate(c_0)) = \Problem(x_0) \leq \Problem(x) = \Query(\Evaluate(c))$, and, hence, the implementation is consistent under partition. 

(Step 4) Since the implementation is confluent and consistent under partition, this implementation is coordination-free.

\end{proof}
\end{mytheorem}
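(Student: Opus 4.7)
The plan is to exhibit a canonical implementation and verify the four requirements in sequence. The natural candidate mirrors the implementation used in Example \ref{example:deadlock3}: let the state space be $2^\Inputs$ with $s_0 = \emptyset$, let $\Write(s,i) = s \cup \{i\}$, let $\Merge(s_1,s_2) = s_1 \cup s_2$, let $\Query(s) = \Problem(s)$ (well-defined whenever $s \in \X$), and let $\CF(x) = \{c \in \Clause \mid \Inputset(c) = x\}$. This is the most permissive coordination function possible, which is exactly what one wants for coordination-freeness; all the real work is then pushed onto showing that the problem's monotonicity is strong enough to make this implementation consistent under partition.

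First I would establish a small bridging lemma: for any clause $c$ built only from set-union $\Write$ and $\Merge$ on $s_0 = \emptyset$, we have $\Evaluate(c) = \Inputset(c)$. This is a straightforward structural induction on the Backus--Naur definition of $\Clause$, using $\emptyset \cup \{i\} = \{i\}$ and associativity of union at the $\Merge$ step. With this in hand, correctness of the implementation is immediate: for any $x \in \X$ and $c \in \CF(x)$, $\Query(\Evaluate(c)) = \Problem(\Inputset(c)) = \Problem(x)$. Confluence is also immediate from the definition of $\CF$, since every clause $c$ with $\Inputset(c) = x$ lies in $\CF(x)$ by construction.

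The main obstacle, and the only place monotonicity is used, is consistency under partition. For this I would prove a second small lemma by induction on the definition of $\preceq$: if $c_0 \preceq c$, then $\Inputset(c_0) \subseteq \Inputset(c)$. The base case $c_0 = c$ is trivial; the $\Write$ and $\Merge$ clauses each add only more inputs to the input set; and transitivity of $\subseteq$ handles the transitive clause of $\preceq$. Given this, fix $x \in \X$, $c \in \CF(x)$, and $c_0 \preceq c$. Set $x_0 = \Inputset(c_0)$; then $x_0 \subseteq x$. Assuming $x_0 \in \X$ so that $\Problem(x_0)$ is defined, monotonicity of $\Problem$ yields $\Problem(x_0) \leq \Problem(x)$, which by the bridging lemma is exactly $\Query(\Evaluate(c_0)) \leq \Query(\Evaluate(c))$.

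The one subtlety I would flag is the implicit assumption that every $x_0 \subseteq x$ arising as $\Inputset(c_0)$ is itself a legal total input in $\X$; otherwise $\Query$ is not defined on intermediate states and the statement of consistency under partition has no content at those states. For the natural case $\X = 2^\Inputs$ this is automatic, and more generally one either restricts attention to problems whose legal-input family is downward closed or interprets $\Query$ as extended to all of $2^\Inputs$ via $\Problem$'s natural extension. Once this is resolved, combining confluence with consistency under partition gives coordination-freeness by Definition \ref{def:coordination-free}, completing the proof.
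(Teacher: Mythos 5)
Your proposal is correct and follows essentially the same route as the paper: the same canonical union-based implementation ($s_0=\emptyset$, $\Write$ and $\Merge$ as set union, $\Query=\Problem$, maximally permissive $\CF$), the same bridging observation that $\Evaluate(c)=\Inputset(c)$, and the same use of monotonicity via $\Inputset(c_0)\subseteq\Inputset(c)$ to get consistency under partition. The only substantive difference is that you explicitly flag the need for $\Inputset(c_0)\in\X$ so that $\Problem$ is defined on intermediate states --- a point the paper's proof silently assumes when it writes $c_0\in\CF(x_0)$ --- which is a genuine, if minor, gain in rigor rather than a different approach.
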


\begin{mytheorem}[Coordination-free $\Rightarrow$ Monotonicity]\label{theorem:cf-monotonic}
If a problem has a consistent, coordination-free distributed implementation, then the problem is monotonic.
\begin{proof}

Assume the coordination-free implementation for problem $\Problem$ is the coordination function $\CF$ and $\mathcal{RO} = (\Write, \Merge, \Query, s_0)$.

$\forall x_1, x_2 \in \X, x_1 \subseteq x_2$. Assume $x_1 = \{i_1, i_2, ...,i_n\}$ and $x_2=\{i_1, i_2, ...,i_n, i_{n+1}, ..., i_m\}$.

Since $\CF$ is confluent, 
$$
\begin{aligned}
& c_1=s_0 \Write i_1 \Write i_2 \Write ... \Write i_n \in \CF(x_1) \\
& c_2=s_0 \Write i_1 \Write i_2 \Write ... \Write i_n \Write i_{n+1} \Write ... \Write i_m \in \CF(x_2)
\end{aligned}
$$

With definition \ref{def:consistent-partition}, $c_1 \preceq c_2$.

Because the implementation is consistent under partition, according to definition~\ref{def:consistent-partition}, $\Query(\Evaluate(c_1)) \leq \Query(\Evaluate(c_2))$. Since this implementation implements problem $\Problem$, according to definition \ref{def:problem}, $\Problem(x_1) = \Query(\Evaluate(c_1)) \leq \Query(\Evaluate(c_2)) = \Problem(x_2)$. Thus the problem is monotonic.

\end{proof}
\end{mytheorem}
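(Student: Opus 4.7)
The plan is to unpack the definition of coordination-freeness (confluence plus consistency-under-partition) and use it to directly verify the order-preserving condition required by monotonicity. Fix a problem $(\Problem, \X, \Values, \leq)$ together with an implementation $(\CF, \mathcal{RO})$ that is coordination-free, and take arbitrary $x_1, x_2 \in \X$ with $x_1 \subseteq x_2$. The goal is to produce witnessing clauses $c_1, c_2$ so that the partition relation $c_1 \preceq c_2$ holds and so that each $c_i$ lies in the appropriate set $\CF(x_i)$; then the required inequality on $\Problem$ values will fall out by chaining the implementation axiom with consistency-under-partition.

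First I would enumerate $x_1 = \{i_1, \dots, i_n\}$ and write $x_2 = x_1 \cup \{i_{n+1}, \dots, i_m\}$ (any ordering will do). Then I set
\[
c_1 = s_0 \Write i_1 \Write i_2 \Write \cdots \Write i_n,
\qquad
c_2 = c_1 \Write i_{n+1} \Write \cdots \Write i_m.
\]
By construction $\Inputset(c_1) = x_1$ and $\Inputset(c_2) = x_2$, so confluence (Definition \ref{def:confluence}) immediately yields $c_1 \in \CF(x_1)$ and $c_2 \in \CF(x_2)$. The relation $c_1 \preceq c_2$ follows by iterating the $\Write$ clause of the partial order definition $m - n$ times, then closing under transitivity.

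Next, because the implementation is consistent under partition, Definition \ref{def:consistent-partition} applied to $c = c_2$ and $c_0 = c_1$ gives $\Query(\Evaluate(c_1)) \leq \Query(\Evaluate(c_2))$. Finally, the defining property of an implementation forces $\Query(\Evaluate(c_i)) = \Problem(x_i)$ for $i = 1, 2$, so we conclude $\Problem(x_1) \leq \Problem(x_2)$, which is precisely monotonicity.

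The only real subtlety — and what I expect to be the main obstacle — is making sure the two applications of coordination-freeness are used on the correct objects: confluence is what lets us freely pick \emph{any} clause with the right input set (so that the linear $\Write$-only clauses $c_1, c_2$ are legal execution traces at all), while consistency-under-partition is what converts the syntactic relation $c_1 \preceq c_2$ into the semantic inequality on query outputs. A minor edge case worth noting is the degenerate situation $x_1 = x_2$ (or $x_1 = \emptyset$), where the construction still works since $\preceq$ is reflexive and the empty clause is simply $s_0$.
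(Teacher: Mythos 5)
Your proposal is correct and follows essentially the same route as the paper's own proof: construct the two linear $\Write$-only clauses for $x_1 \subseteq x_2$, use confluence to place them in $\CF(x_1)$ and $\CF(x_2)$, observe $c_1 \preceq c_2$, and chain consistency-under-partition with the implementation property to get $\Problem(x_1) \leq \Problem(x_2)$. Your added remarks on how $\preceq$ is derived by iterating the $\Write$ case and on the degenerate cases $x_1 = x_2$ and $x_1 = \emptyset$ are details the paper leaves implicit, but the argument is the same.
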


\begin{mycorollary}[CALM Theorem]
A problem has a consistent, coordination-free distributed implementation if and only if it is monotonic.
\begin{proof}
This follows directly from Theorems \ref{theorem:monotonic-cf} and \ref{theorem:cf-monotonic}.
\end{proof}
\end{mycorollary}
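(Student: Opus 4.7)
The CALM theorem is stated as a biconditional, so the natural plan is to split it into its two implications and invoke the two immediately preceding results. Specifically, the ``monotonic implies coordination-free'' direction is exactly the content of Theorem~\ref{theorem:monotonic-cf}, and the ``coordination-free implies monotonic'' direction is exactly the content of Theorem~\ref{theorem:cf-monotonic}. Since both directions have already been established, the proof of the corollary reduces to combining the two theorems.

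Before doing so, I would sanity-check that the statements really do compose. The side ``the problem has a consistent, coordination-free distributed implementation'' in the biconditional is the conclusion of Theorem~\ref{theorem:monotonic-cf} and the hypothesis of Theorem~\ref{theorem:cf-monotonic}; the side ``the problem is monotonic'' is the hypothesis of Theorem~\ref{theorem:monotonic-cf} and the conclusion of Theorem~\ref{theorem:cf-monotonic}. Everything matches verbatim under the definitions of coordination-freeness from Definition~\ref{def:coordination-free} and monotonicity introduced just before Theorem~\ref{theorem:monotonic-cf}, so chaining the two theorems yields the biconditional immediately.

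There is no real obstacle at the corollary level; the substantive work lies in the two constituent theorems, most notably in Theorem~\ref{theorem:monotonic-cf}, where a universal implementation template (take states to be input sets, $\Write$ and $\Merge$ as union, and $\Query = \Problem$) had to be exhibited and shown to be simultaneously confluent and consistent under partition whenever $\Problem$ is order-preserving. Given those two theorems, I would simply write ``This follows directly from Theorems~\ref{theorem:monotonic-cf} and~\ref{theorem:cf-monotonic}'' as the proof of the corollary.
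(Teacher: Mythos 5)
Your proposal is correct and matches the paper's proof exactly: both simply observe that the two directions of the biconditional are precisely Theorems~\ref{theorem:monotonic-cf} and~\ref{theorem:cf-monotonic} and cite them. Your additional sanity check that the hypotheses and conclusions compose verbatim is sound but not needed beyond what the paper states.
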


\section{Further Work: Availability}\label{sec:availability}

Our coordination-free computation model is local-first, meaning that each replica does not have to consult other nodes before beginning its computation. However, the model ignores timing. If you are willing to wait (possibly forever), it is often possible, in theory, to define a monotonic problem.
We begin with a mathematical corner case, which shows that the condition we really need mathematically is not monotonicity but rather the stronger condition of Scott continuity.

Consider Example~\ref{example:garbage-collection}, the garbage collection problem.
Suppose that instead of the $\Problem$ function in Example~\ref{example:garbage-collection2}, we define,
\begin{equation}\label{eq:continuity}
\Problem(x) = 
\begin{cases}
    \emptyset \mbox{, if not all inputs have been received} \\
    \mbox{set of nodes not reachable from the root node, otherwise.}
\end{cases}
\end{equation}

To be more concrete, we define each input to have a sequence number similar to what is shown in Section~\ref{sec:consistency}. As a mathematical corner case, in addition to the finite sets $D_n$, we can allow $\mathbb{N}$ in order to include infinite input sequences. For finite sequences, we add a special ``end token'' to the input alphabet $\Inputs$. This token denotes that this is the last edge in the graph, and any edges with larger sequence numbers can be ignored. With these additions, the problem (\ref{eq:continuity}) is well defined.

This function is trivially monotonic. The output $\Problem(x)$ needs to be the empty set until the input $x$ includes the end token \emph{and} all edges with a smaller sequence number.
If the inputs are not bounded, then this problem definition could result in waiting forever before yielding a result, but it is still mathematically well defined.

We can rule out the ``waiting forever'' case by further restricting problems to be Scott continuous rather than monotonic.
A function $\Problem\colon \X \to \Values$ is Scott continuous if given a directed subset $D \subset \X$,
$$
\Problem(\vee D) = \vee \Problem(D),
$$
where $\vee A$ is the least upper bound of a set $A$ and $\Problem(D)$ is the image of $D$ (the set $\{\Problem(d) ~|~ d \in D\}$).
It is easy to prove that any Scott continuous function is monotonic, but not all monotonic functions are Scott continuous (the function (\ref{eq:continuity}) above is a counterexample)~\cite{DaveyPriesly:90:Posets}.
If we restrict ourselves to finite sets, then every monotonic function is also continuous.
If we assume that, in practice, all program executions are finite, then this mathematical corner case is not very interesting.

Even for finite inputs, however, (\ref{eq:continuity}) does not describe a useful distributed garbage collection problem because monotonicity is achieved by waiting until all inputs are received before producing any useful result.
The CALM theorem, by itself, is agnostic to such timing considerations.
To consider timing, we need to augment the concept of a problem and consider interactivity.
Given such an augmentation, we should be able to derive useful relationships between the CALM theorem and the CAP~\cite{brewer2000cap,Brewer:17:CAP} and CAL\cite{LeeEtAl:23:CAL_CPS,LeeEtAl:23:CAL_IC,Brewer:17:CAP} theorems.
We leave this as a challenge problem for further work.

\section{Conclusions}\label{sec:conclusions}

In this paper, we present a model for distributed computation that separates coordination from computation. We formally defined problems and implementations, and introduced concepts in distributed computation such as strong eventual consistency, consistency, confluence, consistent under partition, and coordination-free. We gave two main theoretical results: necessary and sufficient conditions for strong eventual consistency, and a proof of the CALM theorem from a distributed computation perspective. 

We hope that our results can inspire deeper insight into coordination-free consistency for the distributed systems community, and make it easier for developers and researchers to argue which problems have a coordination-free implementation using the CALM theorem. This work also potentially opens up new research avenues in distributed computation about coordination and consistency by giving a model for computation that separates coordination and computation.

\printbibliography

\end{document}